\documentclass[runningheads]{llncs}
\usepackage[T1]{fontenc}
\usepackage{graphicx}
\usepackage{xspace}

\sloppy
\usepackage{listings}
\lstset{breaklines=true}
\usepackage{graphicx}

\usepackage{todonotes}[disabled]
\usepackage[normalem]{ulem}
\usepackage{xifthen}
\usepackage{amsmath}
\usepackage{amsfonts}
\usepackage{amssymb}
\usepackage{caption}
\usepackage{subcaption}
\usepackage{multirow}
\usepackage{hyperref}
\usepackage{paralist}
\newcommand{\multiset}[1]{\boldsymbol{#1}}

\newcommand{\os}{\ensuremath{\mathfrak{E}}}
\newcommand{\tup}[1]{\langle#1\rangle}
\newcommand{\id}{\mathit{id}}
\newcommand{\supp}{\mathit{supp}}

\newcommand{\sqleq}{\sqsubseteq}

\newcommand{\support}[1]{\texttt{Supp}(#1)}

\newcommand{\prefun}{{\ensuremath{\tt pre}}}
\newcommand{\postfun}{{\ensuremath{\tt post}}}

\newcommand{\out}{out_{\Upsilon}(t)}
\newcommand{\abs}[1]{\lvert#1\rvert}

\usepackage{tikz}
\usetikzlibrary{patterns,shapes,arrows,calc,fit,arrows.meta}
\tikzstyle{place}=[circle,minimum height=6mm,draw]
\tikzstyle{transvert}=[rectangle,minimum width=2mm, minimum height=8mm,draw]
\tikzstyle{transhor}=[rectangle,minimum width=8mm, minimum height=2mm,draw]
\tikzstyle{triangle}=[draw, regular polygon, regular polygon sides=3]
\tikzset{>={Stealth[scale=1.2]}}


\newcommand{\D}{\mathcal{D}}
\newcommand{\E}{\mathfrak{E}}
\newcommand{\F}{\mathcal{F}}

\newcommand{\M}{\mathcal{M}}
\newcommand{\N}{\mathcal{N}}

\newcommand{\T}{\mathcal{T}}

\newcommand{\X}{\mathcal{X}}



\newcommand{\fmset}[1]{\{\{#1\}\}}
\newcommand{\nestTok}{\T}

\usepackage{mathabx}

\usepackage{stmaryrd}

\newcommand*{\defeq}{\stackrel{\text{def}}{=}}
\newcommand{\nuPN}{$\nu$PN\xspace}
\newcommand{\var}{\text{Var}}

\begin{document}

\title{A Lower Bound on Conservative Elementary Object Systems Coverability}

\author{Francesco {Di Cosmo}\inst{1}\orcidID{0000-0002-5692-5681} \and
Soumodev Mal \inst{2} \orcidID{0000-0001-5054-5664} \and
Tephilla Prince \inst{3} \orcidID{0000-0002-1045-3033}
}

\authorrunning{{Di Cosmo} et al.}

\institute{Free University of Bozen-Bolzano, Italy \email{frdicosmo@unibz.it} \and
Chennai Mathematical Institute, India \email{soumodevmal@cmi.ac.in}\and
IIT Dharwad, India \email{tephilla.prince.18@iitdh.ac.in}
}

\maketitle

\begin{abstract}

Elementary Object Systems (EOS) are a form of Petri Net (PN) where tokens carry internal PN. 
This model has been recently proposed for analysis of robustness of Multi Agent Systems. 
While EOS reachability is known to be undecidable, the decidability of coverability of its conservative fragment (where the type of internal PN cannot be completely deleted and, thus, is conserved) was proved a decade ago, no study charted its complexity. 
Here, we take a first step in this direction, by showing how to encode $\nu$PNs, a well studied form of PN enriched with data, into conservative EOS (cEOS). 
This yields a non-Primitive Recursive, $F_{\omega2}$ lower-bound on cEOS coverability.

\end{abstract}

\section{Introduction}

Elementary Object Systems (EOS)~\cite{kanade_object_2004} are an extension of standard Petri Nets (PNs), where tokens carry internal PNs. Different variants of this model have been studied for the last decades (see, for example,~\cite{kohler-busmeier_survey_2014,lomazova00}). Recently, in~\cite{Bussmeier23a}, inspired by the \textit{let it crash} programming approach, EOSs have been proposed as a model to analyze the robustness of Multi Agent Systems suffering of full breakdowns of agents. Specifically, each token represents an agent; when its internal PN enters in a deadlock, the agent breaks. 
\cite{OurPNSE24} studied partial forms of breakdowns, reminiscent of channel imperfections in message passing systems, where the internal configuration of the tokens may non-deterministically degrade. The analysis of reachability and coverability problems for such imperfect EOSs generalizes the coverability problem for a notable fragment of perfect EOSs, namely conservative EOSs (cEOSs). Thus, the complexity of cEOS coverability immediately yields a lower bound for the imperfect counterparts.
A decade ago,~\cite{kohler-busmeier_survey_2014} showed that cEOS coverability is decidable. Yet, the complexity of cEOS coverability has remained an open problem. Here, we take a first step to obtain a lower bound. Technically, we encode \nuPN into cEOSs, which allows us to use the complexity of \nuPN coverability, which is a non-Primitive Recursive $\F_{\omega2}$ problem, as a lower-bound for cEOS coverability.
Preliminaries, including EOSs and \nuPN, are discussed up to Sec.~\ref{sec:nupn}, while the lowerbound is obtained in Sec.~\ref{Sec:lowerbound}. Sec.~\ref{sec:conclusions} draw the conclusions.

\section{Binary Relations and Multisets}\label{sec:prelims}
Let us fix some notation for binary relations. Given a set $X$, the \emph{identity relation} $\id_X$ on $X$ is the relation $\{(x,x)\in X^2\mid x\in X\}$. Given a binary relation $<$ on $X$, we denote its \emph{reflexive closure} $<\cup\id_X$ by $\leq$ and its \emph{anti-reflexive part}  $<\setminus \id_X$ by $\lneq$. We use the symbol $>$ to denote the relation such that, $x>y$ iff $y<x$. For example, if $<$ is transitive, then $\leq$ and $\geq$ are transitive and reflexive (i.e., quasi orders). The same applies to the symbols $\prec$ and $\succ$ and their closures. From now on, we use $\prec$ to denote arbitrary transitive relations, and $<$ (possibly with a subscript) to represent fixed transitive relations, e.g., the standard order of $\mathbb{N}$.

A \emph{multiset} $\multiset{m}$ on a set $D$ is a mapping $\multiset{m}:D\rightarrow \mathbb{N}$. The \emph{support} of $\multiset{m}$ is the set $\support{m} = \{i \mid \multiset{m}(i) > 0\}$. The multiset $\multiset{m}$ is finite if its $\support{\multiset{m}}$ is finite. The family of all multisets over $D$ is denoted by $D^\oplus$. We denote a finite multiset $\multiset{m}$ by enumerating the elements $d\in\support{\multiset{m}}$ exactly $\multiset{m}(d)$ times in between $\{\{$ and $\}\}$, where the ordering is irrelevant. For example, the finite multiset ${\multiset{m}:\{p,q\}\longrightarrow \mathbb{N}}$ such that $\multiset{m}(p)=1$ and $\multiset{m}(q)=2$ is denoted by $\fmset{p,q,q}$.
The empty multiset $\fmset{}$ (with empty support) is also denoted by $\emptyset$. On the empty domain $D=\emptyset$ the only defined multiset is $\emptyset$; to stress this out we denote this special case, i.e., the empty multiset over the empty domain, by $\varepsilon$.
Given two multisets $\multiset{m_1}$ and $\multiset{m_2}$ on $D$, we define $\multiset{m_1} + \multiset{m_2}$ and $\multiset{m_1} - \multiset{m_2}$ on $D$ as follows:
$(\multiset{m_1} + \multiset{m_2})(d) = \multiset{m_1}(d)  + \multiset{m_2}(d)$ and
$(\multiset{m_1} - \multiset{m_2})(d) = max(\multiset{m_1}(d)  - \multiset{m_2}(d),0)$.
Similarly, for a finite set $I$ of indices, $\sum_{i\in I} \fmset{d_i}$ denotes the multiset $\multiset{m}$ over $\bigcup_{i\in I}\{d_i\}$ such that $\multiset{m}(d)=|\{i\in I \mid d_i=d\}|$ for each $d\in D$. 
Where useful, with a slight abuse of notation, we omit the double brackets, i.e., $\sum_{i\in I} \fmset{d_i}= \sum_{i\in I} {d_i}$. If $I=\{1,\dots,n\}$, then $\sum_{i\in I} {d_i}=\sum_{i=1}^n {d_i}$.
Finally, we write $\multiset{m_1} \sqleq \multiset{m_2}$ if, for each $d\in D$, we have $\multiset{m_1}(d)  \leq \multiset{m_2}(d)$. 

\section{Petri Nets}
In this section, we fix the notation of standard PNs~\cite{murata89}. We denote a PN $N$ as a tuple $N=(P,T,F)$, where $P$ is a finite place set, $T$ is a finite transition set, and $F$ is a flow function. Where useful, we equivalently interpret $F$ via the functions ${\prefun}_N : T\rightarrow ( P\rightarrow \mathbb{N})$ where $\prefun_N(t)(p)=F(p,t)$ and ${\postfun}_N : T\rightarrow ( P\rightarrow \mathbb{N})$ where $\postfun_N(t)(p)=F(t,p)$. 
For example, a transition $t\in T$ is enabled on a marking $\mu$ (finite multiset of places) if, for each place $p\in P$, we have $\prefun_N(t)(p)\leq \mu(p)$. Its firing results in the marking $\mu'$ such that $\mu'(p)=\mu(p)-\prefun_N(t)(p)+\postfun_N(t)(p)$, for each $p\in P$. We denote markings according to multiset notation. For example, the marking $\mu$ that places one token on place $p$ and two on place $q$ is denoted by $\fmset{p,q,q}$. The empty marking is denoted by $\emptyset$.
We also work with the special \emph{empty PN} $\blacksquare=(\emptyset,\emptyset,\emptyset)$, whose only marking is $\varepsilon$.

\section{Elementary Object Systems}\label{sec:eos}
Here we give only the syntax and semantics of EOSs. The interested reader can find more details, motivations, and examples, in~\cite{OurPNSE24}.

\begin{definition}[\textbf{EOS}]\label{def:bussy14_eos}
An \emph{EOS} $\os$ is a tuple $\os=\tup{\hat{N},\N,d,\Theta}$ where:
\begin{compactenum}
\item $\hat{N}=\tup{\hat{P},\hat{T},\hat{F}}$ is a PN called \emph{system net}; $\hat{T}$ contains a special set $ID_{\hat{P}}=\{id_p\mid p\in \hat{P}\}\subseteq \hat{T}$ of \emph{idle transitions} such that, for each distinct $p,q\in \hat{P}$, we have $\hat{F}(p,id_p)=\hat{F}(id_p,p)=1$ and $\hat{F}(q,id_p)=\hat{F}(id_p,q)=0$.
\item $\N$ is a finite set of PNs, called \emph{object PNs}, such that $\blacksquare\in\N$ and if $(P_1,T_1,F_1), (P_2,T_2,F_2)\in\N \cup \hat{N}$,\footnote{This way, the system net and the object nets are pairwise distinct.} then $P_1\cap P_2=\emptyset$ and $T_1 \cap T_2 = \emptyset$.
\item $d:\hat{P}\rightarrow \N$ is called the \emph{typing function}. 
\item $\Theta$ is a finite \emph{set of events} where each \emph{event} is a pair $(\hat{\tau},\theta)$, where $\hat{\tau}\in \hat{T}$ and  $\theta:\N \rightarrow \bigcup_{(P,T,F)\in\N} T^\oplus$,
    such that $\theta((P,T,F))\in T^\oplus$ for each $(P,T,F)\in\N$ and, if $\hat{\tau}=id_p$, then $\theta(d(p)) \neq \emptyset$.
\end{compactenum}
\end{definition}
\begin{definition}[Nested Markings]
Let $\os=\tup{\hat{N},\N,d,\Theta}$ be an EOS. The set of \emph{nested tokens} $\nestTok(\os)$ of $\os$ is the set $\bigcup_{(P,T,F)\in\N} (d^{-1}{(P,T,F)}\times P^{\oplus})$. The set of \emph{nested markings} $\M(\E)$ of $\os$ is $\nestTok(\os)^{\oplus}$.
Given $\lambda,\rho\in \M(\E)$, we say that $\lambda$ is a \emph{sub-marking} of $\mu$ if $\lambda \sqleq \mu$.
\end{definition}

EOSs inherit the graphical representation of PNs with the provision that we represent nested tokens via a dashed line from the system net place to an instance of the object net where the internal marking is represented in the standard PN way. However, if the nested token is $\tup{p,\varepsilon}$ for a system net place $p$ of type $\blacksquare$, we represent it with a black-token $\blacksquare$ on $p$. If a place $p$ hosts $n>2$ black-tokens, then we represent them by writing $n$ on $p$. Each event $\tup{\hat{\tau},\theta}$ is depicted by labeling $\hat{\tau}$ by $\tup{\theta}$ (possibly omitting double curly brackets). If there are several events involving $\hat{\tau}$, then $\hat{\tau}$ has several labels.

\begin{definition}[{Projection Operators}]
Let $\os$ be an EOS $\tup{\hat{N},\N,d,\Theta}$. The \emph{projection operators $\Pi^1$} maps each nested marking $\mu=\sum_{i\in I}\tup{\hat{p}_i,M_i}$ for $\E$ to the PN marking $\sum_{i\in I}\hat{p}_i$ for $\hat{N}$. Given an object net $N\in\N$, the \emph{projection operators $\Pi^2_N$} maps each nested marking $\mu=\sum_{i\in I}\tup{\hat{p}_i,M_i}$ for $\E$ to the PN marking $\sum_{j\in J} M_j$ for ${N}$ where $J=\{i\in I\mid d(\hat{p}_i)=N\}$.
\end{definition}

To define the enabledness condition, we need the following notation. We set $\prefun_{N}(\theta(N))=\sum_{i\in I}\prefun_N(t_i)$ where $(t_i)_{i\in I}$ is an enumeration of $\theta(N)$ counting multiplicities. We analogously set $\postfun_{N}(\theta(N))=\sum_{i\in I}\postfun_N(t_i)$.

\begin{definition}[{Enabledness Condition}]\label{def:bussy14_enable} 
Let $\os$ be an EOS $\tup{\hat{N},\N,d,\Theta}$. Given an event $e=\tup{\hat{\tau},\theta}\in \Theta$ and two markings $\lambda,\rho\in\M(\os)$, the \emph{enabledness condition} $\Phi(\tup{\hat{\tau},\theta},\lambda,\rho)$ holds iff
\begin{align*}
\Pi^1(\lambda)=\prefun_{\hat{N}}(\hat{\tau})\ \land \Pi^1(\rho)=\postfun_{\hat{N}}(\hat{\tau})\ \land
\forall N\in \N,\ \Pi^2_N(\lambda)\geq \prefun_N(\theta(N))\ \land\\
\forall N\in\N,\ \Pi^2_N(\rho)=\Pi^2_N(\lambda)-\prefun_N(\theta(N))+\postfun_N(\theta(N))
\end{align*}
The event $e$ is \emph{enabled with mode $(\lambda,\rho)$ on a marking $\mu$} iff $\Phi(e,\lambda,\rho)$ holds and $\lambda\sqleq \mu$.
Its firing results in the step $\mu\xrightarrow{(e,\lambda,\rho)}\mu-\lambda+\rho$.
\end{definition}

The coverability problem for EOSs is defined in the usual way, i.e., it asks whether there is a run (sequence of event firings) from an initial marking $\mu_0$ to a marking $\mu_1$ that covers a target marking $\mu_f$ with respect to the order $\leq_f$ such that $\mu\leq_f\mu'$ iff $\mu'$ is obtained from $\mu$ by adding
\begin{inparaenum}[\itshape (1)]
    \item tokens in the inner markings of available nested tokens and/or
    \item nested tokens with some internal marking on the system net places.
\end{inparaenum}

It is known that EOS coverability is undecidable (Th. 4.3 in~\cite{kohler-busmeier_survey_2014}. However, coverability is decidable on the fragment of \emph{conservative EOSs} (cEOSs; Th. 5.2 in~\cite{kohler-busmeier_survey_2014}), where, for each system net transition $t$, if $t$ consumes a nested token on a place of type $N$, then it also produces at least one token on a place of the same type $N$. 

\begin{definition}[{cEOS}]
A cEOS is an EOS $\os=\tup{\hat{N},\N,d,\Theta}$ with $\hat{N} = \tup{\hat{P},\hat{T},\hat{F}}$ where, for all $\hat{t} \in \hat{T}$ and $\hat{p} \in \support{\prefun_{\hat{N}}(\hat{t})}$, there exists $\hat{p}' \in \support{\postfun_{\hat{N}}(\hat{t})}$ such that $d(\hat{p}) = d(\hat{p}')$.
\end{definition}

\section{\texorpdfstring{$\nu$PNs}{nuPNs}}\label{sec:nupn}
We introduce $\nu$PN as in~\cite{LazicS16}. Let $\Upsilon$ and $\X$ be disjoint infinite sets variables. The variables in $\X$ are called \textit{standard} variables, while those in $\Upsilon$ are called \textit{fresh}. Let $Vars\defeq \X \bigcup \Upsilon$.
\begin{definition}\label{dfn:nuPN}
A \nuPN  is a tuple $\D=\tup{P,T,F}$ where
\begin{inparaenum}[\itshape (1)]
    \item $P$ is a finite non-empty set of places,
    \item $T$ is a finite set of transitions disjoint from $P$,
    \item $F:(P\times T) \bigcup (T \times P) \to Vars^{\oplus}$ is a flow function such that, for each $t\in T$, $\Upsilon\cap\prefun(t)=\emptyset$ and $\postfun(t)\setminus\Upsilon\subseteq\prefun(t)$, where $\prefun(t)=\bigcup_{p\in P} \supp(F(p,t))$ and $\postfun(t)=\bigcup_{p\in P} \supp(F(t,p))$.
\end{inparaenum}
\end{definition}

For each $t\in T$, we set $\var(t)=\prefun(t)\cup\postfun(t)$. In this section, we work with a fixed arbitrary \nuPN $\D=\tup{P,T,F}$ where $P=\{p_1,\dots,p_\ell\}$.
The flow $F_x$ of a variable $x\in \var$ is $F_x:(P\times T) \bigcup (T \times P) \to \mathbb{N}$ where $F_x (p,t)\defeq F(p,t)(x)$ and $F_x(t,p)\defeq F(t,p)(x)$. We denote the vector $\tup{F(p_1,t),\dots,F(p_\ell,t)}\in\mathbb{N}^\ell$ by $F_x(P,t)$ and the vector $\tup{F(t,p_1),\dots,F(t,p_\ell)}\in\mathbb{N}^\ell$ by $F_x(t,P)$.

The set of configurations of $\D$ is the set $(\mathbb{N}^P)^\oplus$. To define the transitions, we introduce:
\[
    in(t)\defeq \sum_{x\in \X(t)} \fmset{F_x(P,t)}
\quad
    \out\defeq \sum_{\nu\in \Upsilon(t)} \fmset{F_\nu(t,P)}
\]
Given a configuration $M=\fmset{m_1,\dots,m_{\abs{M}}}$, a transition $t$ is fireable from $M$ if there is a function $e:\X(t)\longrightarrow\{1,\dots,\abs{M}\}$, called mode, such that, for each $x\in\X(t)$, $F_x(P,t)\leq m_{e(x)}$. We write $M\rightarrow^{t,e} M'$ if 

\[ M=M''+\sum_{x\in\X(t)}\fmset{m_{e(x)}}
\quad
M'=M''+\out+\sum_{x\in\X(t)}\fmset{m_{e(x)}'}
\]
where, for each $x\in\X(t)$, we have $m'_{e(x)}=m_{e(x)}-F_x(P,t)+F_x(t,P)$. Intuitively, the firing of $t$ with mode $e$ over $M$ applies $F_x$, for each $x\in X(t)$ to a distinct tuple $m\in M$ such that $m_{e(x)}\geq F_x(P,t)$ and replaces it with $m'_{e(x)}$. Moreover, it adds new markings $F_{\nu}(t,P)$ for each $\nu\in\Upsilon$.

Without loss of generality, we assume that the net $\D$ is normal~\cite{Rosa-Velardo11}, that is, there is a special variable $\nu\in\Upsilon$ such that, for each $t\in T$ and $p\in P$, either $F(t,p)\sqsubseteq \X^\oplus$ or $F(t,p)=\fmset{\nu}$.

The \textit{size} $\abs{\D}$ of a $\D$ is measured in terms of a unary encoding of the coefficients in the multisets defined by the flow function. Specifically, $\abs{\D}=\max(\abs{P},\abs{T},\sum_{p\in P, t\in T}\abs{F(p,t)}+\abs{F(t,p)})$.
The \textit{coverability} problem for \nuPN asks to check, for a given \nuPN, a target configuration $\tau$, initial configuration $\iota$, if there is some configuration $\tau'$ reachable from $\iota$ such that $\tau\sqsubseteq \tau'$. It is known that \nuPN coverability is $\mathbf{F}_{\omega 2}$-complete~\cite{LazicS16}, i.e., double-Ackermann complete. 

\section{Complexity Lower Bound}\label{Sec:lowerbound}

We now show how to encode $\nu$PNs into cEOS. The idea of our reduction is that we can encode each tuple of a configuration of a \nuPN $\D$ into a dedicated object net inside a \textit{simulator} place at the system net level.
 
We now define an EOS $\os$ that can be used to check coverability in $\D$.  
 We start by defining the object nets. Actually, $\os$ involves only the type $\blacksquare$, used to fire sequences of events, and one more type $N_\D$ that captures the transitions of $\D$ split by variable as specified in the next definitions and example. Since we deal with only two types, in this section we graphically represent system net places of type $N_\D$ by circles, as usual, while we represent places of type $\blacksquare$ by triangles.

$N_\D$ includes all places of $\D$ and the restriction of each transition of $\D$ to each of its variables.

\begin{definition}
    $N_\D$ is the PN $N_\D=(P_\D,T_\D,F_\D)$ such that $P_\D=P$, $T_\D=\biguplus_{t\in T}\{t_x\mid x\in\var(t)\}$, and, for each $p\in P_\D$ and $t_x\in T_\D$, $F_\D(p,t_x)=F_x(p,t)$ and $F_\D(t_x,p)=F_x(t,p)$.
\end{definition}

The system net $\hat{N}=(\hat{P},\hat{T},\hat{F})$ of $\os$ contains the places $\text{sim}$, $\text{selectTran}$, and, for each $t\in T$, the places, transitions, and flow function depicted in Fig.~\ref{fig:lowerboundnuPNtoEOSblock}. The figure depicts also the synchronization structure $\Theta$ of $\os$. Note that the only transitions that can fire concurrently are the $t^\text{fire}_{x}$, for $x\in\var(t)$.\begin{figure}[t]\centering
\resizebox{1\columnwidth}{!}{
\begin{tikzpicture}

\node[triangle,label={[name=selLab]above:\scriptsize $\text{selectTran}$}](sel)at(0,0){};
\node[transhor,label={[name=t1Lab]above:\scriptsize $t_{x_1}^{\text{select}}$}](t1)at(1.5,0){};
\node[triangle,label={[name=sel1Lab]above:\scriptsize $\text{select}_{x_2}^t$}](sel1)at(3,0){};
\node at(4.5,0)(selDots){$\dots$};
\node[triangle,label={[name=t2Lab]above:\scriptsize $\text{select}_{x_n}^t$}](t2)at(6,0){};
\node[transhor,label={[name=sel2Lab]above:\scriptsize $t_{x_n}^\text{select}$}](sel2)at(7.5,0){};

\node[triangle,label={[name=tnuLab]above:\scriptsize $\text{select}_{\nu}^t$}](tnu)at(9,0){};
\node[transhor,label={[name=selnuLab]above:\scriptsize $t_\nu^\text{select}$}](selnu)at(10.5,0){};

\node[place,label={[name=simLab]left:\scriptsize $\text{sim}$}](sim)at(-1.5,1.25){};

\node[place,label={[name=t1selLab]left:\scriptsize $t^\text{selected}_{x_1}$}](t1sel)at(1.5,-1){};
\node[triangle,label={[name=t1runLab]right:\scriptsize $t^\text{run}_{x_1}$}](t1run)at(2.5,-1){};

\node[transhor,label={[name=t1putLab]right:\scriptsize $t^\text{fire}_{x_1}\tup{t_{x_1}}$}] at (2,-2)(t1put){};

\node[place,label={[name=tnselLab]left:\scriptsize $t^\text{selected}_{x_n}$}](tnsel)at(7.5,-1){};
\node[triangle,label={[name=tnrunLab]right:\scriptsize $t^\text{run}_{x_n}$}](tnrun)at(8.5,-1){};

\node[transhor,label={[name=tnputLab]left:\scriptsize $t^\text{fire}_{x_n}\tup{t_{x_n}}$}] at (8,-2)(tnput){};

\node[triangle,label={[name=tnurunLab]left:\scriptsize $t^\text{run}_{\nu}$}](tnurun)at(10.5,-1){};

\node[transhor,label={[name=tnuputLab]left:\scriptsize $t^\text{fire}_{\nu}\tup{t_\nu}$}] (tnuput)at (10.5,-2){};

\draw[->] (sim.south east) -| (t1Lab.north);
\draw[->] (sim.east) -| (sel2Lab.north);

\draw[->] (t1.south) -- (t1sel.north);
\draw[->] (sel2.south) -- (tnsel.north);

\draw[->] ($(selnu.south)-(.2,0)$) |- ($(t1run)+(0,.75)$) -- (t1run);
\draw[->] ($(selnu.south)-(.1,0)$)|- ($(tnrun)+(0,.6)$) -- (tnrun);
\draw[->] ($(selnu.south)-(0,0)$) -- (tnurun);

\draw[->] (sel) -- (t1);
\draw[->] (t1) -- (sel1);
\draw[->] (sel1) -- (selDots);
\draw[->] (selDots)--(t2);
\draw[->] (t2)--(sel2);
\draw[->] (sel2)--(tnu);
\draw[->] (tnu)--(selnu);

\draw[->] (t1put) -| ($(sim.south)+(.2,0)$);
\draw[->] ($(tnput.south) - (.1,0)$) -- ($(tnput.south)-(.1,.15)$) -| (sim);
\draw[->] ($(tnuput.south) - (.1,0)$)-- ($(tnuput.south)-(.1,.3)$)-| ($(sim.south)-(.2,0)$);

    \node[triangle,label={[name=reportLab]above right:\scriptsize $t^\text{report}$}](report)at (2,-3.1){};

\node[transhor,label={[name=doneLab]left:\scriptsize $t^\text{done}$}](done)at(0,-3.1){};

\draw[->] (t1put) -- (report);
\draw[->] (tnput) |- (report.north east);
\draw[->] (tnuput) |- (report.east);
\draw[->] (report) --node [midway, above]{\scriptsize 
$n+1$} (done);
\draw[->] (done) -- (sel);

\node at (sel){\scriptsize $\blacksquare$};

\draw[->](t1sel) -- (t1put);
\draw[->](t1run) -- (t1put);

\draw[->](tnsel) -- (tnput);
\draw[->](tnrun) -- (tnput);

\draw[->](tnurun) -- (tnuput);

\end{tikzpicture}
}

    \caption{The part of $\os$ dedicated to the simulation of a transition $t$ of $\D$ such that $\var(t)=\{x_1,\dots,x_n,\nu\}$. If $\nu\notin\var(t)$, then, $\text{select}^t_\nu$, $t^\text{select}_\nu$, $t_\nu^\text{run}$, and $t_{\nu}^\text{fire}$ have to be dropped, and $\hat{F}(t^\text{select}_x,t_{x_1}^\text{run})=1$, $\hat{F}(t^\text{select}_x,t_{x_n}^\text{run})=1$, and $\hat{F}(t^\text{report},t^\text{done})=n$ has to be set.
    }
    \label{fig:lowerboundnuPNtoEOSblock}
\end{figure}
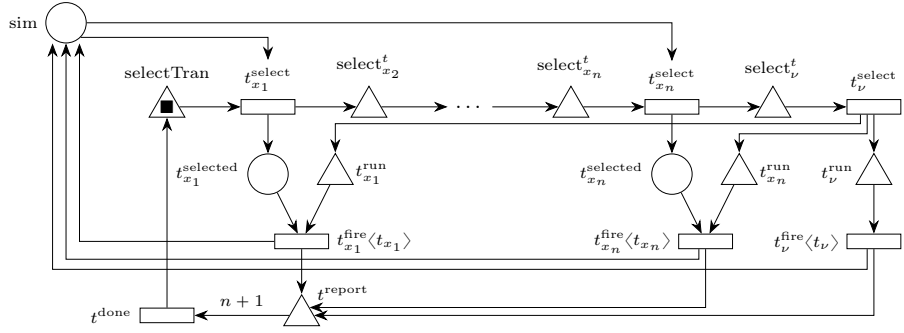

Intuitively, given a marking $\mu$ of $\D$, we capture $\mu$ by using, for each tuple $m\in\mu$ a dedicated $N_\D$ object in a place \textit{sim} with internal marking $m$. 
\begin{definition}\label{def:lowerBoundEncode}
    The configuration $M=\fmset{m_1,\dots,m_\ell}$ of $\D$ is encoded by the configuration $\overline{M}=\sum_{i=1}^\ell \tup{\text{sim}, m_i} + \tup{\text{selectTran}, \varepsilon}$.
\end{definition}

To simulate a transition $t$ of $\D$, we just need to, for each $x\in\var(t)\setminus\{\nu\}$,
\begin{inparaenum}[\itshape (1)]
    \item map $x$ to some object $O_x$ in \textit{sim} and then
    \item concurrently fire the transitions $t_x$ in each $O_x$.
\end{inparaenum} $\os$ performs this mapping by moving the non-deterministically chosen object $O_x$ from $\text{sim}$ to the place $t^{\text{selected}}_{x}$, for $x\in\var(t)\setminus\{\nu\}$. This selection happens sequentially as enforced by the array of places $\text{selectTran}$, $\text{select}^t_{x_i}$, and transitions $t^\text{select}_{x_i}$ and, if present, $\text{select}^t_{\nu}$ and $t^\text{select}_{\nu}$.

Only after firing the last $t^\text{select}_y$ transition,\footnote{That is $y=\nu$ if $\nu\in\var(t)$ and $y=x_n$ otherwise.} the transitions $t^\text{fire}_{x}$ get enabled and, since they are synchronized with $t_x$, they \begin{inparaenum}[\itshape (1)]
    \item move (or create, if $x=\nu$) the object $O_x$ from $t^\text{selected}_{x}$ back to $\text{sim}$,
    \item update the internal marking of $O_x$ according to $t_x$, and
    \item move a $\blacksquare$ token from $t^{\text{run}}_x$ to $t^\text{report}$.
\end{inparaenum}

After firing all the $t^\text{fire}_{x}$ transitions, the $t^\text{done}$ transition gets enabled. Its firing cleans $t^\text{report}$ and produces a single $\blacksquare$ token in $\text{sim}$, completing the simulation of $t$.

The next lemma follows from the previous description. Given some configuration $M$ and $M'$ of $\D$, we say that a run $\overline{M}\rightarrow^*\overline{M'}$ of $\os$ of positive length is \textit{minimal} if in each intermediate configuration the place $\text{selectTran}$ is not marked.
\begin{lemma}
    $M\rightarrow^t M'$ in $\D$ if only if $\overline{M}\rightarrow^{\sigma} \overline{M'}$ in $\os$ for some minimal run $\sigma$ of $\os$.
\end{lemma}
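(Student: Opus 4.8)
The plan is to prove the biconditional by establishing a tight correspondence between a single transition firing $M \rightarrow^t M'$ in $\D$ and a minimal run $\sigma$ in $\os$ that starts and ends with the $\text{selectTran}$ place marked (by the single $\blacksquare$ token it holds in the encoding of any configuration, via the $\tup{\text{selectTran},\varepsilon}$ component of Def.~\ref{def:lowerBoundEncode}). The key structural observation is that the gadget in Fig.~\ref{fig:lowerboundnuPNtoEOSblock} is designed so that, once the $\blacksquare$ token leaves $\text{selectTran}$, it cannot return there until the entire simulation of one $t$ is complete; hence a minimal run corresponds to exactly one pass through one transition gadget, and the choice of which gadget corresponds to the choice of which transition $t$ of $\D$ to fire.

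For the forward direction, I would assume $M \rightarrow^{t,e} M'$ with mode $e:\X(t)\to\{1,\dots,\abs{M}\}$ and construct $\sigma$ explicitly. First I would fire the sequential chain $t^\text{select}_{x_1},\dots,t^\text{select}_{x_n}$ (and $t^\text{select}_\nu$ if $\nu\in\var(t)$), where each $t^\text{select}_{x_i}$ moves the object $O_{x_i} = \tup{\text{sim}, m_{e(x_i)}}$ chosen according to $e$ from $\text{sim}$ into $t^\text{selected}_{x_i}$; the array of $\text{select}$ places forces this order and the final $t^\text{select}_y$ firing deposits $\blacksquare$ tokens into all the $t^\text{run}_x$ places simultaneously. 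Then I would fire each $t^\text{fire}_x\tup{t_x}$ as an event: since $t^\text{fire}_x$ is synchronized with the object transition $t_x$ in $N_\D$, and by the definition of $N_\D$ we have $F_\D(p,t_x)=F_x(p,t)$ and $F_\D(t_x,p)=F_x(t,p)$, firing this event updates the internal marking of $O_x$ from $m_{e(x)}$ to $m_{e(x)} - F_x(P,t) + F_x(t,P) = m'_{e(x)}$, exactly matching the \nuPN\ semantics; the $\nu$-branch creates the fresh object $\tup{\text{sim}, F_\nu(t,P)}$, reproducing $\out$. Finally $t^\text{report}$ accumulates $n+1$ (resp.\ $n$) $\blacksquare$ tokens and $t^\text{done}$ consumes them, returning a single $\blacksquare$ to $\text{selectTran}$, yielding precisely $\overline{M'}$. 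Throughout I must check that $\text{selectTran}$ is unmarked in every intermediate configuration, so $\sigma$ is minimal, and that enabledness (Def.~\ref{def:bussy14_enable}) holds at each step; the mode condition $F_x(P,t)\le m_{e(x)}$ guarantees the object transitions are enabled.

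For the converse, I would argue that any minimal run $\sigma:\overline{M}\rightarrow^*\overline{M'}$ must have exactly this shape. The crucial point is that minimality forbids $\text{selectTran}$ from being marked in intermediate configurations, so the single $\blacksquare$ that leaves $\text{selectTran}$ at the first step commits the run to one transition gadget for $t$ and cannot reset until $t^\text{done}$ fires; moreover the linear arrangement of $\text{select}$ places means the $t^\text{select}$ transitions must fire in order and each exactly once, the $t^\text{fire}_x$ transitions must each fire exactly once (they are the only consumers of the $t^\text{run}_x$ tokens and producers into $t^\text{report}$), and $t^\text{done}$ must fire exactly once to restore $\text{selectTran}$ and reach a configuration of the form $\overline{M'}$. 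I would then read off the mode $e$ from which objects were selected into the $t^\text{selected}_x$ places and invert the computation above to conclude $M\rightarrow^{t,e}M'$.

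The main obstacle I expect is the bookkeeping in the converse direction: ruling out \emph{spurious} minimal runs that interleave or deviate from the intended firing order, e.g.\ runs that select objects whose internal markings do not satisfy $F_x(P,t)\le m_{e(x)}$ (these are blocked because the synchronized event $t^\text{fire}_x$ requires $\Pi^2_{N_\D}(\lambda)\ge\prefun_{N_\D}(t_x)$ to be enabled, so such a run would deadlock before restoring $\text{selectTran}$ and hence could not be a minimal run ending in $\overline{M'}$), or runs that try to begin a second simulation before completing the first (blocked by minimality, since a second pass would require re-marking $\text{selectTran}$). I would handle this by a careful invariant argument tracking the location of the $\blacksquare$ tokens and the number of objects in $\text{sim}$ versus the $t^\text{selected}$ places, showing that the only way to return a $\blacksquare$ to $\text{selectTran}$ without an intermediate marking of $\text{selectTran}$ is the canonical firing sequence described above.
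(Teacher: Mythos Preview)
Your proposal is correct and follows essentially the same approach as the paper: both directions proceed by explicitly constructing the run $\sigma$ as the sequence $t^\text{select}_{x_1},\dots,t^\text{select}_{x_n},(t^\text{select}_\nu),t^\text{fire}_{x_1},\dots,t^\text{fire}_{x_n},(t^\text{fire}_\nu),t^\text{done}$ in the forward direction, and by arguing that any minimal run decomposes into exactly this three-block shape (select / fire / done) in the converse, then reading off the mode $e$ from the selected objects. Your discussion of the converse-direction bookkeeping (ruling out deadlocked selections and premature restarts via minimality and the enabledness condition on the synchronized events) is in fact more explicit than the paper's own treatment, which simply asserts the three-block decomposition without elaborating.
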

\begin{proof}
We assume $\var(t)=\{x_1,\dots,x_n,\nu\}$: in case $\nu\notin\var(t)$, the argument is analogous. 

If $M\rightarrow^t M'$, then there is a mode $e$ such that $M\rightarrow^{t,e}M'$. Thus, we can write $M=\fmset{m_1,\dots,m_n}+M''$ for some $m_1,\dots,m_n\in\mathbb{N}^\ell$ and configuration $M''$ of $\D$. By \nuPN semantics, $M'=M''+F_\nu(P,t)+\sum_{i=1}^n m_i - F_{x_i}(P,t) + F_{x_i}(t,P)$.
Thus, by Definition~\ref{def:lowerBoundEncode}, $\overline{M}=\overline{M''}+\sum_{i=1}^n \tup{\text{sim},m_i}$ and $\overline{M'}=\overline{M''}+\tup{\text{sim},F_\nu(P,t)}+\sum_{i=1}^n\tup{\text{sim},m_i - F_{x_i}(P,t) + F_{x_i}(t,P)}$.

The transition $t^{\text{select}}_{x_1}$ is enabled on $\overline{M}$. Moreover, we can fire the sequence of transitions $t^{\text{select}}_{x_1},\dots,t^{\text{select}}_{x_n}$ so as to select objects in $\text{sim}$ matching the mode $e$, i.e., by reaching a configuration $M''-\tup{\text{selectTran},\varepsilon}+\sum_{i=1}^n\tup{t^{\text{selected}}_{x_i},m_{e(x_{i})}}$. 

After firing $t^{\text{select}}_\nu$, for $x\in\var(t)$, the transitions $t^{\text{fire}}_{x}$ are enabled and, because of the synchronization structure, their firing results in the configuration
$M''-\tup{\text{selectTran},\varepsilon}+\sum_{i=1}^n\tup{\text{sim},m_{e(x_{i})} 
-F_{x_i}(P,t)+ F_{x_i}(t,P)} + \tup{\text{sim},F_{\nu}(t,P)} + (n+1)\tup{t^{\text{report}},\varepsilon}$. On this configuration, $t^{\text{done}}$ is enabled and its firing returns the configuration $\overline{M'}$. Thus, we have exhibited a run $\sigma$ such that $\overline{M}\rightarrow^{\sigma}\overline{M'}$. Moreover, this sequence is minimal.

Vice-versa, if there is a minimal run $\sigma$ of $\os$ such that $\overline{M}\rightarrow^\sigma\overline{M''}$, then we can organize $\sigma$ in three blocks: 
\begin{inparaenum}[\itshape (1)]
\item a prefix $\sigma'$ that amounts to the firing of the $t^\text{select}_x$ transitions,
\item an intermediate run $\sigma''$ that amounts to the firing of the transitions $t^{\text{fire}}_{x}$, and
\item a suffix $\sigma'''$ that amounts to the firing of $t^{\text{report}}$, for $x\in\var(t)$.
\end{inparaenum}

Moreover, the configuration reached at the end of $\sigma'$ is $\overline{M''}-\tup{\text{selectTran},\varepsilon} + \sum_{i=1}^n \tup{t^{\text{selected}}_{x_i},m_i} + \sum_{x\in\var(t)}\tup{t^\text{run}_{x_i},\varepsilon}$. Hence, $\overline{M}=\overline{M''} +\sum_{i=1}^n\tup{\text{sim},m_i}$. Moreover, by applying $\sigma$ to $\overline{M}$, we obtain that $\overline{M'}=\overline{M''}+\tup{\text{sim},F_\nu(t,P)}+\sum_{i=1}^n \tup{\text{sim}, m_i - F_{x_i}(P,t) + F_{x_i}(t,P)}$.

Consequently, $M=M''+\sum_{i=1}^n m_i $ and $M'=M''+ F_\nu(t,P) + \sum_{1=1}^n m_i - F_{x_i}(P,t)+F_{x_i}(t,P)$.
Thus, $t$ is enabled on $M$ with the mode $e$ such that $e(x_i)=m_i$, for each $i\in\{1,\dots,n\}$ and $M\rightarrow^{t,e}M'$.

\end{proof}

Consequently, the set of reachable configurations in $\os$ that encode some configuration in $\D$ essentially matches the set of configurations reachable in $\D$. Thus, we can check coverability in $\D$ of a configuration $\tau$ from an initial configuration $\iota$ by checking coverability in $\os$ of $\overline{\tau}$ from $\overline{\iota}$. Thus, we have a reduction from \nuPN coverability to cEOS coverability. Note that this reduction is polynomial, since the construction of $\os$ is polynomial. Since \nuPN coverability is $\mathbf{F}_{\omega2}$-complete we obtain the following theorem.

\begin{theorem}
    cEOS coverability is $\mathbf{F}_{\omega2}$-hard.
\end{theorem}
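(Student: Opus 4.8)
The plan is to obtain the hardness result as an immediate consequence of the reduction machinery already assembled, so the main content is to argue that the single-transition correspondence of the preceding lemma lifts to a full coverability-preserving reduction and then invoke the known complexity of \nuPN coverability. First I would fix an arbitrary instance of \nuPN coverability, namely a normal \nuPN $\D=\tup{P,T,F}$ together with an initial configuration $\iota$ and a target configuration $\tau$, and construct the cEOS $\os$ together with the encoded configurations $\overline{\iota}$ and $\overline{\tau}$ exactly as in Definition~\ref{def:lowerBoundEncode}. I would first verify the two bookkeeping facts that make the reduction legitimate: that $\os$ is genuinely a cEOS (every system-net transition that consumes an $N_\D$-typed nested token also produces one, which holds by inspection of Fig.~\ref{fig:lowerboundnuPNtoEOSblock}, since each $t^{\text{fire}}_x$ returns its object to $\text{sim}$), and that the map $M\mapsto\overline{M}$ is polynomial-time computable and of polynomial size in $\abs{\D}$.

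Next I would promote the step-level equivalence of the lemma to a run-level equivalence. The plan is to show, by induction on the number of simulated \nuPN steps, that $M\rightarrow^*M'$ in $\D$ if and only if $\overline{M}\rightarrow^*\overline{M'}$ in $\os$ via a run that decomposes into consecutive minimal blocks. The forward direction composes the runs $\sigma$ supplied by the lemma. For the backward direction, the key observation is that $\text{selectTran}$ is marked exactly in the encoded configurations $\overline{M}$ and nowhere in between, so any run of $\os$ between two encoded configurations factors uniquely into minimal segments, each of which the lemma identifies with a single \nuPN step; this yields $\Pi$-level matching of reachable encoded configurations with reachable \nuPN configurations.

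I would then connect reachability to coverability. Using Definition~\ref{def:lowerBoundEncode} and the monotonicity of the encoding, I would argue that $M\sqleq M'$ holds in $\D$ if and only if $\overline{M}\leq_f\overline{M'}$ holds in $\os$, so that covering $\tau$ from $\iota$ in $\D$ is equivalent to covering $\overline{\tau}$ from $\overline{\iota}$ in $\os$. Concretely, a configuration $\tau'\sqgeq\tau$ reachable in $\D$ corresponds to a reachable $\overline{\tau'}$, and adding tokens to inner markings of the $\text{sim}$-objects realizes exactly the $\leq_f$ order on the $N_\D$ component; conversely a covering witness in $\os$ can be truncated to its encoded part. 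This gives a many-one, polynomial-time reduction from \nuPN coverability to cEOS coverability.

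Finally, since \nuPN coverability is $\mathbf{F}_{\omega2}$-complete~\cite{LazicS16} and $\mathbf{F}_{\omega2}$ is closed under polynomial-time reductions, the existence of this reduction forces cEOS coverability to be $\mathbf{F}_{\omega2}$-hard, establishing the theorem. The step I expect to require the most care is the backward direction of the run-level equivalence: I must be sure that a run between encoded configurations cannot interleave the selection, firing, and reporting phases of distinct simulated transitions in a way that escapes the minimal-block factorization. The argument hinges on the fact that $\text{selectTran}$ carries a single $\blacksquare$ token that is consumed at the start of a block and regenerated only by $t^{\text{done}}$ at its end, together with the counter on $t^{\text{report}}$ forcing all $n+1$ firing transitions to complete before $t^{\text{done}}$ can fire; making this phase-locking fully rigorous, rather than relying on the informal description accompanying the figure, is the main obstacle.
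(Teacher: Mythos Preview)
Your proposal is correct and follows essentially the same route as the paper: lift the lemma's single-step correspondence to runs, observe that the encoding is order-preserving so coverability transfers, note the construction is polynomial, and conclude by the $\mathbf{F}_{\omega2}$-completeness of \nuPN coverability. You are in fact more careful than the paper, which compresses all of this into one informal paragraph; in particular, your explicit check that $\os$ is conservative and your identification of the $\text{selectTran}$ token as the mechanism forcing any covering witness to be an encoded configuration (so that the backward direction of coverability goes through) are exactly the points the paper leaves implicit.
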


\section{Conclusions}\label{sec:conclusions}

We have taken a first step towards the study of cEOS coverability complexity. Specifically, we have showed that cEOS coverability is non-Primitive Recursive and $\F_{\omega2}$-hard since it enjoys a polynomial reduction from \nuPN coverability. This raises questions about the precise relationship of cEOSs with \nuPN and, more in general, with data nets.

\bibliographystyle{splncs04}
\bibliography{references}

\begin{thebibliography}{1}
\providecommand{\url}[1]{\texttt{#1}}
\providecommand{\urlprefix}{URL }
\providecommand{\doi}[1]{https://doi.org/#1}

\bibitem{OurPNSE24}
{Di Cosmo}, F., Mal, S., Prince, T.: Deciding reachability and coverability in
  lossy {EOS}. In: International Workshop on {Petri} Nets and Software
  Engineering. {CEUR} Workshop Proceedings, vol.~3730, pp. 74--95 (2024),
  \url{https://ceur-ws.org/Vol-3730/paper04.pdf}

\bibitem{kohler-busmeier_survey_2014}
K{\"{o}}hler{-}Bu{\ss}meier, M.: A survey of decidability results for
  elementary object systems. Fundamenta Informaticae  \textbf{130}(1),  99--123
  (2014), \url{https://doi.org/10.3233/FI-2014-983}

\bibitem{Bussmeier23a}
K{\"{o}}hler{-}Bussmeier, M., Capra, L.: Robustness: {A} natural definition
  based on nets-within-nets. In: International Workshop on {Petri} Nets and
  Software Engineering. {CEUR} Workshop Proceedings, vol.~3430, pp. 70--87
  (2023), \url{https://ceur-ws.org/Vol-3430/paper5.pdf}

\bibitem{LazicS16}
Lazic, R., Schmitz, S.: The complexity of coverability in {\(\nu\)}-petri nets.
  In: Proceedings of the 31st Annual {ACM/IEEE} Symposium on Logic in Computer
  Science, {LICS} '16, New York, NY, USA, July 5-8, 2016. pp. 467--476. {ACM}
  (2016)

\bibitem{lomazova00}
Lomazova, I.A.: Nested {P}etri nets - a formalism for specification and
  verification of multi-agent distributed systems. Fundamenta Informaticae
  \textbf{43}(1-4),  195--214 (2000). \doi{10.3233/FI-2000-43123410}

\bibitem{murata89}
Murata, T.: Petri nets: Properties, analysis and applications. IEEE
  \textbf{77}(4),  541--580 (1989). \doi{10.1109/5.24143}

\bibitem{Rosa-Velardo11}
Rosa{-}Velardo, F., de~Frutos{-}Escrig, D.: Decidability and complexity of
  petri nets with unordered data. Theor. Comput. Sci.  \textbf{412}(34),
  4439--4451 (2011)

\bibitem{kanade_object_2004}
Valk, R.: Object {Petri} nets: Using the nets-within-nets paradigm. In:
  Advances in {Petri} Nets. LNCS, vol.~3098, pp. 819--848 (2003),
  \url{https://doi.org/10.1007/978-3-540-27755-2\_23}

\end{thebibliography}
\end{document}